\newtheorem{definition}{Definition}
\newtheorem{lemma}{Lemma}
\newtheorem{theorem}{Theorem}
\renewcommand{\cite}{\citep}
\begin{document}

\title{Cover-Encodings of Fitness Landscapes}

\author{Konstantin Klemm}
  \email{klemm@ifisc.uib-csic.es}
\affiliation{
  IFISC (CSIC-UIB), Campus Universitat de les Illes Balears, 
    E-07122 Palma de Mallorca, Spain
}

\author{Anita Mehta}
  \email{anita@bioinf.uni-leipzig.de}
\affiliation{Max Planck Institute for Mathematics in the Sciences, Inselstrasse 22, D-04103 Leipzig, Germany}
\author{Peter F. Stadler}
  \email{studla@bioinf.uni-leipzig.de}
\affiliation{Bioinformatics Group, Department of Computer Science and
  Interdisciplinary Center for Bioinformatics,
  University Leipzig, D-04107 Leipzig, Germany}
\affiliation{Max Planck Institute for Mathematics in the Sciences, Inselstrasse 22, D-04103 Leipzig, Germany}
\affiliation{Santa Fe Institute, Santa Fe, NM 87501, USA}

\begin{abstract}  
  The traditional way of tackling discrete optimization problems is by
  using local search on suitably defined cost or fitness landscapes. Such
  approaches are however limited by the slowing down that occurs when the
  local minima that are a feature of the typically rugged landscapes
  encountered arrest the progress of the search process. Another way of
  tackling optimization problems is by the use of heuristic approximations
  to estimate a global cost minimum. Here we present a combination of these
  two approaches by using \emph{cover-encoding maps} which map processes
  from a larger search space to subsets of the original search space. The
  key idea is to construct cover-encoding maps with the help of suitable
  heuristics that single out near-optimal solutions and result in
  landscapes on the larger search space that no longer exhibit trapping
  local minima.  We present cover-encoding maps for the problems of the
  traveling salesman, number partitioning, maximum matching and maximum
  clique; the practical feasibility of our method is demonstrated by
  simulations of adaptive walks on the corresponding encoded landscapes
  which find the global minima for these problems.

  \keywords{Adaptive walk \and coarse-graining \and oracle function \and
    genotype-phenotype map \and combinatorial optimization}
\end{abstract}

\maketitle

\section{Introduction} 
\label{sect:intro}

Fitness landscapes have proved to be a valuable concept in the
understanding of adaptation in evolutionary biology and beyond, by
visualizing the relationships between genotypes and effective reproductive
success \cite{Wright:32,Wright:67}.  This concept has been taken forward in
the field of evolutionary computation, where the performance of
optimization algorithms utilizing local search has often been described as
dynamics on a fitness landscape, see e.g.\ the book by
\citet{Engelbrecht:14}.

However, fitness functions alone do not determine the performances of local
search algorithms, which depend also on the structure of the search spaces
involved. These in turn are determined by two largely independent
ingredients: (1) the concrete representations of the configurations that
are to be optimized, referred to as encodings (2) Locality in the search
space, referred to as a move set.

For many well-studied combinatorial optimization problems and related
models from statistical physics (such as spin glasses), there is a natural
encoding. For instance, tours of a Travelling Salesperson Problem (TSP) are
naturally encoded as permutations of the cities concerned, while spin
configurations are encoded as strings over the alphabet $\{+,-\}$ with each
letter referring to a fixed spin variable. This natural encoding is usually
free of redundancy; any residual redundancies that occur usually arise from
simple symmetries of the problem which can easily be factored out. For
instance, TSP tours can start at any city so that they are invariant under
rotations, while many spin glass models are invariant under simultaneous
flipping of all spins. This natural or ``direct'' encoding is often
referred to as the \emph{phenotype space}, see e.g.\
\cite{Rothlauf:06,Neumann:10,Rothlauf:11,Borenstein:14}.

In biology, fitness is conceptually understood as a property (function) of
the genotype. It depends, however, on properties of higher-level structures
such as molecular structure, gene-regulatory networks, tissues, or organs,
i.e., on a phenotype. The relationship of genotype and fitness, therefore,
is a composition of a genotype-phenotype map and phenotype-dependent
fitness function. This decomposition has been studied extensively in
several distinct models systems, including RNA secondary structures,
\cite{Schuster:94a}, gene regulatory networks \cite{Ciliberti:07}, and
metabolic networks \cite{Dykhuizen:87,Flamm:10b}.  Here, we focus on the
abstract structure rather than the specifics of such models.

For a given encoding, irrespective of whether it is genotypic or
phenotypic, the performance of search crucially depends on the move
set. Here, we will consider only reversible, mutation-like moves. The
search space therefore is modeled as an undirected graph. More general
settings are discussed e.g.\ by \citet{Flamm:07a}. The cost function
assigned to a specific search space defines a \emph{fitness
  landscape}. Evolutionary algorithms can thus be viewed as dynamical
systems operating on landscapes, whose structure has, as a consequence,
been studied extensively in the field
\cite{Reidys:02a,Ostman:10,Engelbrecht:14}.

Continuing the analogy with biology in evolutionary computation, an
additional encoding $Y$, the so-called \emph{genotype space}, is often used
\cite{Rothlauf:03,Rothlauf:06}.  The genotype-phenotype relation is
determined by a map $\alpha:Y\to X\cup\{\varnothing\}$, where $\varnothing$
represents phenotypic configurations that do not occur in the original
problem, i.e., $y\in Y$ does not encode a feasible solution of the original
problem whenever $\alpha(y)=\varnothing$.  For example, a frequently used
genotypic encoding for TSP tours comprises binary strings for two cities
which represent their presence (\texttt{1}) or absence (\texttt{0}), for
each of the possible adjacencies \cite{TSP:06}. Most binary strings,
however, do \emph{not} correspond to TSP tours.

In practice, genotypic representations are usually chosen with a high
degree of redundancy to tackle optimization problems which often also
introduces neutrality, i.e., the appearance of adjacent configurations with
the same value of the cost function.  Detailed investigations of fitness
landscapes from molecular biology have shown that degrees of neutrality
\emph{can} facilitate optimization \cite{Schuster:94a,Reidys:02a} due to
the inclusion of extensive neutral paths which prevent trapping in
metastable states \cite{Schuster:94a,Fernandez:07,Yu:02,Banzhaf:06}. On the
other hand, ``synonymous encodings'' where genotypes mapping to the same
phenotype form tight clusters in the genotype space have been
advocated for the design of evolutionary algorithms
\cite{Rothlauf:06,Choi:08,Rothlauf:11}. Rather than having neutral
  paths connecting remote areas of the landscape, cost-equivalent
  configurations are locally clustered in synonymous encodings.

What is clear is that, empirically, the introduction of arbitrary
redundancy (by means of random Boolean network mapping) does not increase
the performance of mutation-based search \cite{Knowles:02}, suggesting that
the inclusion of redundancy should be suitably designed in order to
facilitate optimization.  One such approach was that of \citet{Klemm:12b},
which emphasized the utility of such inhomogeneous genotype-phenotype maps
via the idea that low-cost solutions could be enriched and optimization
made more efficient in genotype space if the size of the preimage
$|\alpha^{-1}(x)|$ of the phenotypes were anti-correlated with the cost
function $f(x)$ .  Of course, for such anti-correlations to be imposed,
$\alpha$ needs to become explicitly dependent on the cost function.

\section{Simplifying Landscape Structure by Encoding} 

Before delving into the technicalities, we present a conceptual outline of
the key ideas of this contribution.  Our starting point is the
twenty-year-old observation by \citet{Ruml:1996} that certain redundant
encodings of the Number-Partitioning Problem (NPP) allow simple, generic
optimization heuristics to find dramatically improved solutions. In
previous work \cite{Klemm:12b} we found that this approach was not limited
to the NPP, but that suitably chosen redundant encodings also improved the
performance of heuristics on several other combinatorial optimization
problems. In the present work, our objectives are to understand (a) why the
particular method used by \cite{Ruml:1996} works so well and (b) how it can
be generalized to essentially arbitrary combinatorial optimization problems
in a principled way.

We focus in this contribution on black-box-type optimization scenarios in
which the information on the cost function $f(x)$ is exclusively obtained
by evaluating it for specific configurations $x\in X$ in the search space
$X$. The sequence of these function evaluations is determined by the
optimization heuristic. Practical algorithms of this type propose
candidates $x\in X$ for evaluation based on past evaluation results. These
candidates are chosen locally in the vicinity of past successful candidates
with the help of rules that depend on the representation of $X$.  This
explicitly or implicitly defines a topological structure on $X$. For the
purpose of the present contribution we assume that the topology of the
search space $X$ is expressed by a notion of adjacency that is respected by
the search process.

Intuitively, the most important obstruction for local optimization
heuristics is the presence of a large number of local optima that trap the
search process. The aim of a redundant encoding, therefore, is to provide
an alternative representation $Y$ of the optimization problem that reduces
the number of local optima and makes it easier to find the globally optimal
solution. Formulated over $Y$, we would wish that
\begin{itemize} 
\item[(i)] neighborhoods in $Y$ are small enough to be searched in
  practice.
\item[(ii)] for every starting point there is a path to the global optimum 
  such that the cost function is decreasing, or at least non-increasing.
\end{itemize}
Condition (i) ensures that we still deal with local search heuristics,
while condition (ii) intuitively makes the landscape easy to search. Note
that condition (ii) does not make the optimization problem trivial, since
the heuristics still have to find an efficient path among possibly many
very long ones. Its real significance is that it rules out traps and
guarantees that simple downhill search will be successful eventually.

Is it possible at least in principle to construct such an encoding? The
prepartition encoding, which performed best for the NPP \cite{Ruml:1996},
provides an important hint. Each particular encoding $y\in Y$ corresponds
to a restricted version of the original optimization problem, i.e., it can
be seen as constraining the original search space $X$ to a subset
$\varphi(y)\subseteq X$. A deterministic approximation is then used to
solve the restricted problem on $\varphi(y)$. For every $y\in Y$ this
provides an upper bound on the cost function $\tilde f(y)$. Since the
encoding is chosen such that there is also a code $\hat y$ for the global
optimum $\hat x\in X$, i.e., $\varphi(\hat y)=\{\hat x\}$, the task now
becomes to find $\hat y$, which minimizes $\tilde f$ by construction. The
numerical results by \cite{Ruml:1996} suggest that this auxiliary problem
of minimizing the cost function of the encoding is much easier than the
original, despite the fact that the search space is much larger.
Below we show that this is case because (1) $\tilde f$ does a good job at
approximating the true solution $\tilde F(y)$ of the restricted
optimization problem on $\varphi(y)$ and (2) the perfect solutions
$\tilde F(y)$ give rise to landscapes with the desired properties mentioned
above.

This observation suggests a general construction for ``good'' landscape
encodings. The first step is the construction of a genotype space $Y$ and
an encoding scheme $\varphi$ that maps genotypes to restrictions of the
original problem rather than a particular phenotype $y$. This map has to
satisfy certain conditions discussed in detail in
Section~\ref{subsect:oracle} to be a good choice. The cost function then
enters by guiding, for every genotype $y\in Y$, a heuristic that solves the
restricted problem $\varphi(y)$.

Following the formal introduction of the general concepts, we construct
landscape encodings explicitly for several well-known examples. In
Section~\ref{sect:coarse} we focus on a particularly useful construction
that makes use of the fact that the restricted subproblems on $\varphi(y)$
can be seen as smaller instances of the same type of optimization problem,
or alternatively, as coarse-grained problems. We show in particular that
the NPP heuristic that motivated our approach is also of this type. In
Section~\ref{sect:num}, finally, we use numerical experiments to show that
the encoding scheme proposed here also works well in practice.

\section{A Theory of Encoding Representations} 

\subsection{Landscapes} 

Formally, an instance $(X,f)$ of a combinatorial optimization problem
consists of a finite set $X$ and a cost function $f:X\to\mathbb{R}$ on $X$.
The task of the combinatorial optimization problem $(X,f)$ is to find a
\emph{global minimum} $\hat x\in X$ so that $f(\hat x)\le f(x)$ for all
$x\in X$.

A \emph{landscape} $(X,\sim,f)$ consists of a finite set $X$ endowed with a
symmetric and irreflexive (adjacency) relation $\sim$ and a cost function
$f:X\to\mathbb{R}$. A point $x^*\in X$ is a strict local minimum in
$(X,\sim,f)$ if (i) $f(x^*)>f(\hat x)$ and (ii) there is no $x'\in X$ with
$f(x')<f(x^*)$ and an $f$-non-increasing path $x^*=x_0,x_2,\dots,x_k=x'$,
that is, $x_{i-1}\sim x_{i}$ and $f(x_{i-1})\ge f(x_i)$ holds for $0<i\le
k$. Note that a global minimum $\hat x$ is not a strict local minimum as
defined above.

For any $X'\subseteq X$ the restricted problem $(X',f_{|X'})$, where
$f_{|X'}(x)=f(x)$ for all $x\in X'$, consists in finding a $\hat x'\in X'$
so that $f(\hat x')\le f(x')$ for all $x'\in X'$. A restricted landscape 
$(X',\sim,f_{|X'})$ can be defined analogously.

\subsection{Oracle Function and Cover-Encoding Map} 
\label{subsect:oracle}

A key ingredient in our reasoning is to consider the global solutions
  of restricted optimization problems. This is formalized as follows:
\begin{definition} 
  The \emph{oracle function} $F:2^X\to\mathbb{R}$ of an optimization
  problem $(X,f)$ is 
  \begin{equation}
    F(X'):=\min_{x\in X'} f(x) 
  \end{equation}
  for all $X'\subseteq X$. We use the convention $F(\emptyset)=\infty$. 
\end{definition} 
We say that a subset $X'\subseteq X$ is \emph{good} if
$F(X')=F(X)$, i.e., if $X'$ contains a global optimum, and \emph{bad} if
$F(X')>F(X)$.  The oracle function is by definition monotonic in the
following sense:
\begin{equation} 
  X'' \subseteq X' \Longrightarrow F(X'')\ge F(X') 
\label{eq:monotone}
\end{equation}

We call $F$ an oracle function because in general there is no efficient
algorithm for computing it. In fact, if we had an efficient way to compute
$F$, we would already have solved the original optimization problem as
well. Nevertheless, it is a useful theoretical construct, as we shall see
below. First, it guides our construction of encodings of the original
optimization problem that have the potential of being easily solved, or at
least easier to solve. Second, it provides an inroad for constructing
practical heuristics \emph{provided} we can come up with a good
approximation for $F$.

We start by formalizing the idea of an encoding of a landscape.
\begin{definition} 
  A function $\varphi: Y\to 2^X$ is a \emph{cover-encoding map} for $X$ if it 
  satisfies 
\begin{description}
  \item[(Y1)] $\bigcup_{y\in Y} \varphi(y) = X$.
\end{description}
\end{definition}
Property (Y1) states that the collection of sets $\{\varphi(y)|y\in Y\}$ is a
set cover of $X$. The points $y\in Y$ can be thought as coding for a
particular element of this set cover. In the following, we will be
interested in cover-encoding maps that satisfy some or all of the following 
additional properties:
\begin{description}
\item[(Y0)] $\varphi(y)\ne\emptyset$. 
\item[(Y2)] For every $x\in X$ there is a $y\in Y$ such that
  $\varphi(y)=\{x\}$.
\item[(Y3)] There is $y\in Y$ such that $\varphi(y)=X$.
\end{description}
Note that both (Y2) and (Y3) imply (Y1). Axiom (Y0) excludes infeasible 
points in $Y$.  

It is not hard to see that cover-encoding maps always exist. In particular,
consider any subset
$Y\subseteq \mathfrak{P}_0(X) = 2^X\setminus\{\emptyset\}$, the set of
  non-empty subsets of $X$, such that (i) the singletons $\{x\}\in Y$ for
all $x\in X$ and (ii) $\{X\}\in Y$. Then the identity $\iota$ is obviously
a cover-encoding map that satisfies (Y0), (Y1), (Y2), and (Y3).
  
Now consider an optimization problem $(X,f)$ and let $\varphi:Y\to 2^X$ be a
cover-encoding map for $X$. We define $\tilde F: Y\to\mathbb{R}$ as the
composition of $\varphi$ with the oracle function of $(X,f)$, i.e., $\tilde
F(y) = F(\varphi(y))$. In the following we will be interested in the
relationship between the ``encoded'' optimization problem $(Y,\tilde F)$
and the original problem $(X,f)$.

If condition (Y2) is satisfied, there is $\hat y\in Y$ so that
$\varphi(\hat y)=\{\hat x\}$ for every global optimum of the original
problem. For most applications it is sufficient to find one global optimum,
hence we will consider the weaker condition:
\begin{description}
\item[(F0)] There is $\hat y\in Y$ so that (i) $|\varphi(\hat y)|=1$ and
  $F(\varphi(\hat y))=f(\hat x)$.
\end{description}
Condition (F0) simply states that there exists a code $y\in Y$ that
identifies a global optimum of the original problem $(X,f)$. This is
sufficient to consider $(X,f)$ and $(Y,\tilde F)$ as ``equivalent
optimization problems''.

The identity cover-encodings from $Y_{\max}:=\mathfrak{P}_0(X)$ and
$Y_{\min}:=\{ \{x\}|x\in X\} \cup \{ X\}$ are the extreme cases.
$Y_{\max}$ encodes all possible subproblems, while $Y_{\min}$ only encodes
the singletons, i.e., the evaluation of the cost function $f$ for every
$x\in X$, as well as the full optimization problem.

In this contribution, we are interested in search-based algorithms. Hence we
fix an adjacency relation $\sim$ on $Y$. For the landscape $(Y,\sim,\tilde
F)$ we consider the following three properties:
\begin{description} 
\item[(R1)] For every $y\in Y$ with $\tilde F(y)=\tilde F(\hat y)$ 
            there is a sequence 
            $y=y_0,y_1,\dots,y_k=\hat y$ such that $y_i\sim y_{i-1}$ for 
            $0<i\le k$ and $\tilde F(y_i)=\tilde F(\hat y)$. 
\item[(R2)] For every $y\in Y$ with $\tilde F(y)>\tilde F(\hat y)$ 
            there is a sequence 
            $y=y_0,y_1,\dots,y_k=\hat y$ such that $y_i\sim y_{i-1}$ for 
            $0<i\le k$, $\tilde F(y_k)=F(\hat y)$ and 
            $\tilde F(y_{i-1})\ge \tilde F(y_i)$.  
\item[(R3)] Every $y$ with $\varphi(y)\ne X$ has a neighbor $y'\sim y$ with
            $\varphi(y)\subset\varphi(y')$.
\end{description}
In plain words, (R1) ensures that all minimum-cost encodings are
  connected by paths staying at minimum cost. Under (R2), each
  configuration is the beginning of a path to a minimum-cost configuration,
  with the value of the cost function not increasing along the
  path. Property (R3) uses the fact that all configurations in $Y$ are
  subsets of $X$. It says that each configuration $y \in Y$ has a
  neighboring configuration properly containing $y$. It is worth noting
  that (R3) is independent of the oracle function $F$.

For identity cover-encodings introduced above, a natural definition of
adjacency is to set $y\sim y'$ and $y'\sim y$ whenever (i) $y\subseteq y'$,
(ii) $y\ne y'$, and (iii) if $y\subseteq y'' \subseteq y'$ then $y''=y$ or
$y''=y'$. That is, two sets are adjacent if they are adjacent in the Hasse
diagram for set inclusion. By construction, every $y\in Y$ is connected by
a sequence of adjacent sets to all singletons $\{x\}$ with $x\in y$
and to the full set $y=X$. Since $\varphi$ is the identity, (R3)
holds. Using that $y\subseteq y'$ implies $\tilde F(y)\ge \tilde F(y')$,
properties (R1) and (R2) also following immediately.

Taken together, the identity cover encodings demonstrate that cover
encodings and associated adjacencies satisfying (Y0) through to (Y3) as
well as (R1), (R2), and (R3) always exist.

\begin{lemma}
(R3) implies (R2) for any oracle function $F$.
\end{lemma}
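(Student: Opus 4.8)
The plan is to prove (R2) by constructing, from any starting point $y$ with $\tilde F(y)>\tilde F(\hat y)$, an explicit $\tilde F$-non-increasing path whose endpoint attains the global minimum cost. The engine of the construction is (R3), which lets us repeatedly enlarge the encoded subset $\varphi(y)$ while, by monotonicity of the oracle function, keeping the cost from rising.

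First I would record a preliminary observation: under the hypothesis of (R2) we necessarily have $\varphi(y)\ne X$. Indeed, $\tilde F(\hat y)=F(\varphi(\hat y))=f(\hat x)=F(X)$ equals the global minimum, so if $\varphi(y)=X$ held we would get $\tilde F(y)=F(X)=\tilde F(\hat y)$, contradicting $\tilde F(y)>\tilde F(\hat y)$. Hence (R3) is applicable to $y_0:=y$.

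Next I would build the path iteratively. Given $y_i$ with $\varphi(y_i)\ne X$, invoke (R3) to obtain a neighbor $y_{i+1}\sim y_i$ with the proper inclusion $\varphi(y_i)\subset\varphi(y_{i+1})$. By the monotonicity property~\eqref{eq:monotone}, $\varphi(y_i)\subseteq\varphi(y_{i+1})$ yields $\tilde F(y_{i+1})=F(\varphi(y_{i+1}))\le F(\varphi(y_i))=\tilde F(y_i)$, so each step is $\tilde F$-non-increasing, exactly as (R2) demands. Because every step strictly enlarges the encoded set and $X$ is finite, the cardinalities $|\varphi(y_0)|<|\varphi(y_1)|<\cdots$ strictly increase and are bounded by $|X|$; thus after at most $|X|$ steps the iteration halts at an index $k$ with $\varphi(y_k)=X$. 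For this terminal configuration $\tilde F(y_k)=F(X)=\tilde F(\hat y)$, the minimum cost, so the concatenated sequence $y=y_0,y_1,\dots,y_k$ is a non-increasing path reaching minimum cost. Since only monotonicity of $F$ is used, the implication holds for \emph{any} oracle function.

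The one point that needs care is the endpoint. The walk driven by (R3) arrives at a configuration $y_k$ with $\varphi(y_k)=X$ rather than at $\hat y$ itself, whose image is the singleton $\{\hat x\}$; these two configurations share the common $\tilde F$-value $F(X)$, so the path does reach the minimum cost intended by (R2) as stated in words. I expect this is the intended conclusion, but if one insists on the literal endpoint $y_k=\hat y$, it suffices to append a minimum-cost-preserving path from the full-image configuration to $\hat y$ — which is precisely the kind of connection guaranteed by (R1), since any $y$ with $\tilde F(y)=\tilde F(\hat y)$ is connected to $\hat y$ through minimum-cost configurations.
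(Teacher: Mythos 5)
Your proof is correct and follows essentially the same route as the paper's: iterate (R3) to strictly enlarge $\varphi(y)$, use monotonicity~(\ref{eq:monotone}) to keep $\tilde F$ non-increasing, and invoke finiteness of $X$ to terminate at a configuration with $\varphi(y^{(k)})=X$, hence minimum cost. Your closing remark on the endpoint is well taken: the paper's own proof likewise stops at a full-cover configuration rather than at $\hat y$ itself, implicitly reading (R2) as requiring only that the path reach minimum cost.
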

\begin{proof}
  If $\varphi(y)=X$, then $\tilde F(y)=F(X)=f(\hat x)=\tilde F(\tilde y)$ by
  construction. Now consider an arbitrary starting point $y$. By (R3), there
  is a neighbor $y'\sim y$ such that $\varphi(y)\subset \varphi(y')$ and by
  equ.(\ref{eq:monotone}) we therefore have $\tilde F(y')\le \tilde F(y)$.
  Repeating the argument, we obtain a $\tilde F$-non-increasing sequence
  $y,y',y'',\dots,y^{(k)},\dots$ along which $\varphi$ is strictly
  increasing in each step. Since $X$ is finite, there is a finite $k$ so
  that $\varphi(y^{(k)})=X$ and thus $\tilde F(y^{(k)})=\tilde F(\hat y)$,
  i.e., (R2) is satisfied.
\end{proof}

The importance of conditions (R1) and (R2) stems from the following
observation:

\begin{theorem} \label{theo:nostrictloc}
  Suppose $(X,f)$, $\varphi:Y\to 2^X$, and the relation $\sim$ on $Y$ are
  chosen such that (Y1), (F0), (R1), and (R2) are satisfied. Then the
  landscape $(Y,\sim,\tilde F)$ has no strict local optimum.
\end{theorem}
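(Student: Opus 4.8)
The plan is to show that no point $y^* \in Y$ can satisfy both defining conditions of a strict local minimum simultaneously, so that every candidate fails either condition (i) or condition (ii). The first step is to pin down the global minimum of the encoded landscape. From (F0) we have $\tilde F(\hat y) = F(\varphi(\hat y)) = f(\hat x)$, while for any $y \in Y$ the inclusion $\varphi(y) \subseteq X$ together with the monotonicity of the oracle function, equation~(\ref{eq:monotone}), gives $\tilde F(y) = F(\varphi(y)) \ge F(X) = f(\hat x) = \tilde F(\hat y)$. Hence $\hat y$ realizes the minimum value of $\tilde F$ over all of $Y$, and it is legitimate to use $\hat y$ as the reference global optimum when specializing the strict-local-minimum definition from $(X,\sim,f)$ to $(Y,\sim,\tilde F)$.

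With this in hand I would argue by a dichotomy on the value $\tilde F(y)$ for an arbitrary $y \in Y$. If $\tilde F(y) = \tilde F(\hat y)$, then $y$ attains the global minimum value, so condition (i) of the definition---which demands $\tilde F(y) > \tilde F(\hat y)$---is violated, and $y$ cannot be a strict local minimum (matching the remark that a global minimum is never a strict local minimum). If instead $\tilde F(y) > \tilde F(\hat y)$, I would invoke (R2) directly: it supplies a sequence $y = y_0, y_1, \dots, y_k = \hat y$ with $y_{i-1} \sim y_i$ and $\tilde F(y_{i-1}) \ge \tilde F(y_i)$, that is, a $\tilde F$-non-increasing path from $y$ to $\hat y$. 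Since $\tilde F(\hat y) < \tilde F(y)$, the endpoint $\hat y$ is a point of strictly smaller cost reachable along a non-increasing path, which is exactly what condition (ii) forbids. Thus condition (ii) fails and $y$ is again not a strict local minimum.

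Combining the two cases, every $y \in Y$ fails to be a strict local minimum, which is the assertion of the theorem. I expect the only delicate point to be the bookkeeping in the first step---verifying that the abstract definition of strict local minimum transfers faithfully to $(Y,\sim,\tilde F)$ with $\hat y$ playing the role of $\hat x$, which rests entirely on establishing that $\hat y$ is genuinely a global minimizer of $\tilde F$. The escape argument itself is then immediate from (R2). It is worth noting that (R1) is not actually needed for this conclusion: it governs connectivity \emph{among} the minimum-cost configurations and would matter for a statement about reaching one particular optimum, not for the mere absence of traps established here.
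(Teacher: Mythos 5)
Your proof is correct, and it takes a genuinely leaner route than the paper's. Both arguments run the same dichotomy on $\tilde F(y)$, but the paper invokes (R1) in each branch: for $\tilde F(y)=\tilde F(\hat y)$ it uses (R1) to place $y$ on a connected neutral network containing $\hat y$, and for $\tilde F(y)>\tilde F(\hat y)$ it reads (R2) as supplying a non-increasing path only to \emph{some} minimum-cost configuration $y'$, then appends an (R1)-path from $y'$ to $\hat y$; in this way the paper actually proves the stronger fact that every $y\in Y$ is joined to $\hat y$ itself by an $\tilde F$-non-increasing path. You observe instead that the definition of a strict local minimum is already defeated earlier: a minimum-value point fails condition (i) outright, and any other point fails condition (ii) the moment a non-increasing path reaches \emph{any} strictly cheaper configuration, so the path never needs to be prolonged to $\hat y$ and (R1) is indeed dispensable for the stated conclusion --- your closing remark is right, and it effectively sharpens the theorem to require only (Y1), (F0), and (R2). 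One caution: you quote (R2) literally as a path terminating at $\hat y$, whereas the paper's intended and internally used reading is weaker --- its proof that (R3) implies (R2) ends at a configuration $y^{(k)}$ with $\varphi(y^{(k)})=X$, not at $\hat y$ --- namely that the endpoint merely attains the value $\tilde F(\hat y)$; your argument survives either reading, since the only property of the endpoint you use is that its cost $\tilde F(\hat y)$ is strictly below $\tilde F(y)$. Finally, your preliminary step verifying from (F0) and equation~(\ref{eq:monotone}) that $\hat y$ is a global minimizer of $\tilde F$ makes explicit something the paper leaves implicit, and your diagnosis of what (R1) buys --- connectivity within the optimal neutral network, which is what the paper's stronger conclusion and its later adaptive-walk discussion rely on --- is exactly on target.
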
 
\begin{proof}
  Let $y\in Y$ be an arbitrary starting point. If $\tilde F(y)=\tilde
  F(\hat y)$ then $y$, by (R1), is not a local optimum but part of a
  connected neutral network that contains the global optimum $\hat y$. If
  $\tilde F(y)\ne\tilde F(\hat y)$ then $\tilde F(y)>\tilde F(\hat y)$. By
  (R2) there is a path with non-increasing values of $\tilde F$ that
  connects $y$ to a point $y'$ with $\tilde F(y')=\tilde F(\hat y)$. We
  already know that there is a path with constant values of $\tilde F$
  leading from $y'$ to the global optimum $\hat y$. Thus $y$ is connected
  by a $\tilde F$-non-increasing path to $\hat y$. Hence $y$ is, by
  definition, not a strict local optimum.
\end{proof}

In particular, the identity cover encodings satisfy the conditions of
Thm.~\ref{theo:nostrictloc} and thus their landscapes have no strict local
optima. There are, however, also very different general constructions with
this property. In the remainder of this section, we consider one example.

\begin{definition} 
  Let $(X,\sim_X,f)$ be an arbitrary landscape. Its \emph{square encoding}
  is the map $\varphi: X\times X \rightarrow 2^X$, $(\xi_,x)\mapsto \{\xi,x\}$
  for $(\xi,x) \in X\times X$. The neighbourhood relation $\sim_Y$ on
  $Y:=X\times X$ is given by
\begin{equation*}
(x_1,x_2) \sim_Y (\xi_1,\xi_2) \Leftrightarrow 
     (x_1=\xi_1 \wedge x_2 \sim_X \xi_2) \vee 
     (x_2=\xi_2 \wedge x_1 \sim_X \xi_1)  
\end{equation*}
\end{definition}
The graph $(Y,\sim_Y)$ is the Cartesian square of the graph $(X,\sim_X)$
\cite{Hammack:16}.  The idea behind this construction is to allow a local
search algorithm to keep track of the best solution so far in one variable
and use the other variable for exploration. Figure~\ref{fig:square} shows
an example.
\begin{lemma} 
  The landscape $X\times X,\sim_Y,\tilde F$ satisfies (Y0), (Y2), (F0),
  (R1), and (R2). In particular it has no strict local optima.
\end{lemma}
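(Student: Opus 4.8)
The plan is to verify each listed property for the square encoding $\varphi:X\times X\to 2^X$, $(\xi,x)\mapsto\{\xi,x\}$, and then invoke Theorem~\ref{theo:nostrictloc} to conclude the absence of strict local optima. Most of the properties are immediate from the definition, so I would dispatch them quickly before turning attention to the one substantive claim. Properties (Y0) and (Y2) are essentially free: every image $\varphi(\xi,x)=\{\xi,x\}$ is nonempty, giving (Y0), and setting $\xi=x$ yields $\varphi(x,x)=\{x\}$, so every singleton is realized, giving (Y2). Since (Y2) implies (Y1), the cover condition holds as well. For (F0), I would take $\hat y=(\hat x,\hat x)$ for a global optimum $\hat x$ of $(X,f)$; then $|\varphi(\hat y)|=1$ and $\tilde F(\hat y)=F(\{\hat x\})=f(\hat x)$, as required.

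The real work is in establishing (R1) and (R2), which concern connectivity along constant-cost and non-increasing-cost paths in the landscape $(Y,\sim_Y,\tilde F)$. Here I must actually use the structure of the Cartesian product adjacency, under which a single move changes exactly one coordinate via a $\sim_X$-step. The key observation is that $\tilde F(\xi,x)=\min\{f(\xi),f(x)\}$, so the cost of a configuration is governed by whichever coordinate currently carries the better (lower-cost) solution. For (R2), starting from an arbitrary $(\xi,x)$, the natural strategy is to freeze the coordinate achieving the minimum and use the other coordinate to walk—via $\sim_X$-steps in $X$—toward $\hat x$; along such a walk the frozen coordinate keeps $\tilde F$ pinned at $\min\{f(\xi),f(x)\}$, so $\tilde F$ never increases, and once the moving coordinate reaches a value with cost below the frozen one the minimum can only drop further. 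I would argue that one can always route this walk so as to reach a configuration containing $\hat x$, at which point $\tilde F=f(\hat x)=\tilde F(\hat y)$. For (R1), given $(\xi,x)$ with $\tilde F(\xi,x)=f(\hat x)$, at least one coordinate already equals a global optimum, and I would move the other coordinate (holding the optimal one fixed) until it too reaches $\hat x$, staying at constant cost $f(\hat x)$ throughout and terminating at $\hat y=(\hat x,\hat x)$.

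The main obstacle I anticipate is the hidden assumption that the underlying graph $(X,\sim_X)$ is connected: the freeze-and-walk argument for both (R1) and (R2) requires that the moving coordinate can actually reach $\hat x$ from its current value by a sequence of $\sim_X$-steps. If $(X,\sim_X)$ is disconnected, a configuration whose non-optimal coordinate lies in a component not containing $\hat x$ could be stranded, breaking (R2). I would therefore state connectivity of $(X,\sim_X)$ as a standing hypothesis (or note that it is implicit in the problem setup), and then the walks are guaranteed to exist. A secondary point worth checking is that along the (R2)-walk the cost is genuinely non-increasing even when the moving coordinate temporarily takes high-cost values: this is handled precisely because the frozen coordinate continues to supply the current minimum, so $\tilde F$ stays fixed until the moving coordinate undercuts it. Having verified (Y1), (F0), (R1), and (R2), the final sentence of the lemma follows directly from Theorem~\ref{theo:nostrictloc}.
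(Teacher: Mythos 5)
Your treatment of (Y0), (Y2), and (F0) matches the paper's, and your observation that connectivity of $(X,\sim_X)$ is an essential standing hypothesis is correct --- the paper's proof invokes it explicitly. The gap is in your (R2) argument. You freeze the coordinate achieving the minimum at the \emph{start} and walk the other coordinate all the way to $\hat x$, claiming that ``once the moving coordinate reaches a value with cost below the frozen one the minimum can only drop further.'' That claim is false. Write $c$ for the frozen coordinate's cost; along your walk $\tilde F$ equals $\min\{c,f(x_t)\}$ where $x_t$ is the mover's current position, and nothing prevents $f(x_t)$ from dipping below $c$ and then rising above $c$ again before the walk reaches $\hat x$. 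For instance, if $c=5$ and the mover's costs along the route are $7,3,6,0$, then $\tilde F$ takes the values $5,3,5,0$: the step from $3$ back up to $5$ violates the non-increasing requirement that (R2) demands. Pinning only bounds $\tilde F$ by its initial value; it does not make the path monotone, and your ``secondary point'' restates rather than resolves this. (Your (R1) argument escapes the problem only because there the pinned value is the global minimum $f(\hat x)$, which can never be undercut, so that path really is constant.)

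The repair is exactly the paper's device: re-decide at \emph{every} step which coordinate to move. At a configuration $(x_1,x_2)$ with, say, $f(x_1)\ge f(x_2)$, move $x_1$ one $\sim_X$-step closer to $\hat x$ (possible by connectedness), obtaining $y'=(x',x_2)$ with $\tilde F(y')=\min\{f(x'),f(x_2)\}\le f(x_2)=\tilde F(y)$; each single step is non-increasing because the coordinate held fixed is always the \emph{current} minimizer, never a stale one. Since $d_Y\bigl((x_1,x_2),(\hat x,\hat x)\bigr)=d_X(x_1,\hat x)+d_X(x_2,\hat x)$ drops by exactly one per step, at most $d_Y(y,\hat y)$ iterations reach $\hat y$, yielding (R2), and the same induction run at cost $f(\hat x)$ yields (R1). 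With that replacement your proof is complete and coincides with the paper's.
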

\begin{proof}
  Considering the properties of $\varphi$, (Y0) is obtained with
  $|\varphi(y)|>0$ for all $y \in Y$; (Y2) is fulfilled choosing $y=(x,x)$
  for any $x \in X$. This implies (Y0) so $\varphi$ is a cover-encoding
  map. We have (Y3) only in the trivial case $|X| \le 2$. Property (F0) is
  fulfilled with $\hat y = (\hat x,\hat x)$.

  For $y,y' \in Y$, we write $d_Y(y,y')$ for the standard graph distance,
  the length of a shortest path, between $y$ and $y'$; analogous notation
  for the distance $d_X$ on $(X,\sim_X)$. For $(x_1,x_2) \in Y$ and
  $(\xi_1,\xi_2) \in Y$ we have $d_Y((x_1,x_2),(\xi_1,\xi_2)) =
  d_X(x_1,\xi_1)+d_X(x_2,\xi_2)$.

  Now let $(x_1,x_2) = y \in Y \setminus \{(\hat x, \hat x)\}$.  Then $x_1
  \neq \hat x \neq x_2$. We assume, without loss of generality, $f(x_1) \ge
  f(x_2)$ (otherwise swap $x_1$ and $x_2$).  Because $(X,\sim_X)$ is
  connected, we find a neighbour $x' \sim_X x_1$ with $d_X(x',\hat
  x)=d_X(x_1,\hat x)-1$. With $y' = (x',x_2)$, we have $\tilde F(y') = \min
  \{ f(x'), f(x_2) \} \le f(x_2) = \tilde F(y)$ and $d_Y(y',\hat y) =
  d_Y(y,\hat y) -1$. For each element $y \in Y$ we thus find a $y' \in Y$
  that (i) is strictly closer to $\hat y$ than $y$ is; and (ii) does not
  evaluate at higher value than $y$ under $\tilde F$. Using the argument
  inductively at most $d_Y(y,\hat y)$ times, the desired sequences in (R1)
  and (R2) are constructed. Therefore properties (R1) and (R2) are
  fulfilled by $(Y,\sim_Y,\tilde F)$.  Theorem~\ref{theo:nostrictloc} now
  implies that there are no strict local minima.
\end{proof}

\begin{figure}
  \begin{center}
  (a)  \includegraphics[width=0.38\textwidth]{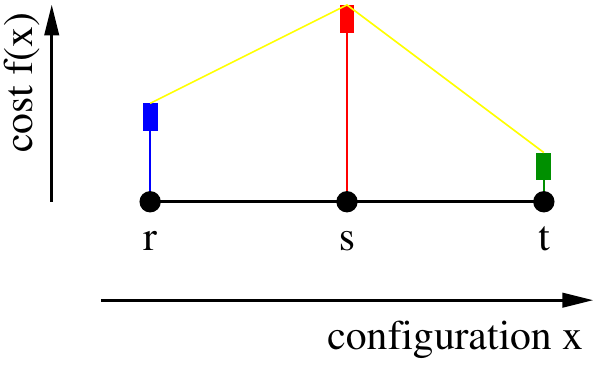} \hspace*{5mm}
  (b)  \includegraphics[width=0.50\textwidth]{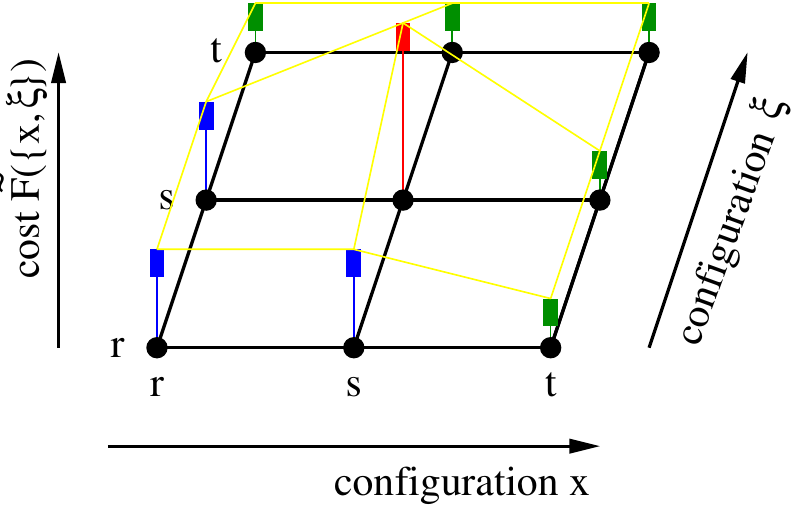}
  \end{center}
  \caption{Illustration of the square encoding. (a) Original landscape
      $(X,\sim,f)$ with configurations $X=\{r,s,t\}$. The three
      configurations form a path under the adjacency relation $\sim$. The
      cost function $f$ renders $t$ the unique global minimum, $r$ a strict
      local minimum. Thus $t$ is not reachable from $r$ by a non-increasing
      path.  (b) Landscape resulting from square encoding of the landscape
      in (a). Here each configuration is a tuple of configurations of the
      original landscape, $(x,\xi) \in X \times X$. The cost function is
      $\tilde F((x,\xi))= \min\{f(x),f(\xi)\}$. On this landscape, a
      minimal cost configuration is reachable from all configurations by a
      non-increasing path.}
  \label{fig:square}
\end{figure}

\subsection{Adaptive Walks} 

An adaptive walk on a fitness landscape $(Y,\sim_Y,\tilde F)$ is a Markov
chain on the state space $Y$ with transition probabilities
$\pi_{y \rightarrow z} = 1/d_y$ for $y \sim_Y z$ and
$\tilde F(z) \le \tilde F(y)$. Otherwise $\pi_{y \rightarrow z} = 0$,
except for $y=z$ where $\pi_{y \rightarrow y}$ is obtained by normalization
of probability.  The degree $d_y$ of state $y$ is the number of neighbours
$|\{ z \in Y : z \sim_Y y \}|$. Formulated as a stochastic search
  algorithm, a neighbour $z$ of the current (time $t$) configuration $y$ is
  drawn uniformly at random. If $\tilde F(z) \le \tilde F(y)$, the walk
  proceeds to configuration $z$ at time $t+1$; otherwise it remains at
  configuration $y$.

Call $\hat Y$ the set of global minima of the landscape
$(Y,\sim_Y, \tilde F)$. Assume that this landscape does not have a strict
local minimum.  Then each realization of an adaptive walk eventually
  hits a global minimum.  Due to the absence of strict local minima, the
  adaptive walk is trapped only at global minima. Each invariant measure of
  the adaptive walk therefore evaluates to zero on all configurations with
  non-minimum cost. Property (R2) clearly is a necessary condition for an
optimization problem to be solvable by adaptive walks alone. The conditions
of Theorem~\ref{theo:nostrictloc} are already sufficient as it excludes
strict local optima.

\subsection{Examples of Cover-Encoding Maps}

Let us now turn to constructing some problem specific examples of
cover-encoding maps. We will then use some of these examples to show that
some cover-encoding maps are useful to construct good heuristic search
algorithms for several well-studied combinatorial optimization problems.

\subsubsection{Prepartition Encoding for the NPP} 

An NPP instance is described by a list $(a_1,\dots,a_n)$ of numbers. We
write $[n]:=\{1,\dots,n\}$ for the index set. We have to divide these $n$
numbers into two subsets with as equal a sum as possible. In other words,
we assign to each index $i$ a variable $x_i\in\{-1,+1\}$ so that
\begin{equation}
  f(x) = \left| \sum_{i=1}^n x_i a_i  \right|  \to \min!
\end{equation} 
see e.g.\ \cite{Mertens:06} for a review. The set $X$ consists of all
  strings of $-1$ and $+1$ of length $n$, the set $Y$ consists of all
  functions $[n]\to[n]$. The so-called \emph{prepartitioning} encoding
\cite{Ruml:1996} of the NPP can be written in the following way: Each
  function $y:[n]\to[n]$ defines the partition
$\Pi_y:=\{ y^{-1}(k)|1\le k\le n\}$ whose classes are the indices of the
input numbers that are assigned the same value of $y$. As usual we write
$[i]_{\Pi_y}$ for the class $y^{-1}(k)$ that contains index $i$.  For given
$\Pi_y$ we now insist that the signs $x_i=x_j$ whenever $y(i)=y(j)$. This
amounts to the restricted set of configurations
\begin{equation}
  \varphi(y) = \{ x\in X| x_i=x_j \textrm{ if } j\in [i]_{\Pi_y} \}.
\end{equation}

One easily checks that $\varphi(y)=X$ whenever $y$ is a bijection, i.e., (Y3)
is satisfied. Furthermore, the subset $Y^*=\{ y\in Y \big| |y([n])|=2\}$
corresponds exactly to the assignments of positive and negative signs:
Writing $y([n])=\{p,q\}$ simply set $x_1=+1$ if $y(i)=p$ and $x_1=-1$ if
$y(i)=q$.  (More precisely, the choice of $x_1=+1$ or $x_1=-1$ is
arbitrary; the symmetry can, however, easily be removed e.g.\ fixing
$x_1=+1$ once and for all.) Conversely, every assignment of signs has a
representation as a bipartition in $Y^*$. Thus (Y2) is satisfied.

The most natural choice of an adjacency $\sim$ on $Y$ is to define $y\sim
y'$ if and only if $y(i)\ne y'(i)$ for exactly one $i\in[n]$. Unless $y$ is
a bijection, there is at least one unused value $k\in [n]\setminus y([n])$
and at least one pair $j',j''\in[n]$ with $y(j')=y(j'')$. The neighbor $y'$
of $y$ with $y'(i)=y(i)$ for $i\ne j''$ and $y'(j'')=k$ corresponds to
refinement of the partition $\Pi_y$ because
$[j']_{\Pi_y'}=[j']_{\Pi_y}\setminus\{j''\}$, $[j'']_{\Pi_y'}=\{j''\}$, and
all other classes of $\Pi_y'$ and $\Pi_y'$ are the same. Thus $(Y,\sim)$
satisfies (R3). 

An optimal solution $\hat x$ of the NPP $(X,f)$ is a partition $\hat\Omega$
of $[n]$ into exactly two classes $Q_+$ and $Q_-$ so that $x_i=+1$ for
$i\in Q_+$ and $x_i=-1$ for $i\in Q_-$. A code $y\in Y$ is good if there is a
configuration in $\varphi(y)$ in which the signs can be assigned in exactly
this manner, i.e., if $\Pi_y$ is a refinement of $\hat\Omega$. Conversely,
$\varphi(y)$ is good only if it is a refinement of a bipartition $\Omega$
that represent a global minimum. Generically $\hat\Omega$ is unique. Now
consider two classes $Q_1$ and $Q_2$ in $\Pi_y$ that are contained in the
small class of $\Omega$, i.e., $Q_1,Q_2\subset\Omega$. Reassigning one
element at a time from $Q_2$ to $Q_1$ thus corresponds to a sequence of
codes $y=y_1,y_2,\dots y_{|Q_2|}$ all of which are encode refinements
$\Omega$. Furthermore, $y_{|Q_2|}$ is one class less than $y$. Repeating
this step at most $n-2$ times eventually results in $\Omega$. Intermediate
codes $y_i$ and $y_{i-1}$ are adjacent by construction and satisfy $\tilde
F(y_i)=\tilde F(\hat y)$, i.e, condition (R1) is satisfied. Thus we
conclude that the ``oracle landscape'' $(Y,\sim,\tilde F)$ has no strict
local minima.
 
\subsubsection{Prepartition Encoding for the TSP} 

The cost function of TSP \cite{Gutin:07} is
\begin{equation}
f(\pi) = \sum_{i=1}^n d_{\pi(i),\pi(i+1)}
\end{equation}
where $\pi\in X$ is a bijection $\pi:[n]\to C$ from the index set $[n]$ to
a set of cities $C$.  The index $i$ specifies the position along the
tour. For a city $c$, therefore, $\pi^{-1}(c)$ is its position along the
tour. The problem is parametrized by distances $d:C\times C\to\mathbb{R}$
that satisfy $d(c,c)=0$ for all $c\in C$ but in general are neither
symmetric nor do they satisfy the triangle inequality.

\citet{Klemm:12b} introduced the following version of a prepartition
encoding. Here, an arbitrary function $y:C\to[n]$ is used to restrict the
possible orderings of the cities along the tour as follows: For all cities
$c,d\in C$, the condition $y(c)<y(d)$ implies
$\pi^{-1}(c)<\pi^{-1}(d)$. Again this defines a subset $X_y$ of the search
space $X$ of each $y$. We use the same definition of adjacency in
$Y$. Here, constant functions $y$ impose no restrictions on $\pi$, i.e,
$\varphi(y)=X$ whenever $y(c)=y(d)$ for all $c,d\in C$. On the other
hand, if $y$ is bijective then $X_y$ consists only of a single tour since
in this case $y(c)=\pi^{-1}(c)$ for all $c\in C$, i.e., $\pi=y^{-1}$. Thus
(Y2) and (Y3) are satisfied.

To address properties (R2) and (R1), we first observe that given an encoding
$y$, we can always move one city $c$ with $y(c)=k$ to one of the classes
defined by $y$ with an adjacent value $k'$. More precisely, suppose $k'$ is
such that (a) there is a city $d$ so that $y(d)=k'$ and b) there are no cities
$e$ with $y(e)=k''$, for any $k''$ between $k$ and $k'$. If $k'>k$, the city
which we can move is the one with $y(c)=k$ that appears last in the optimal
tour $\omega\in \varphi(y)$; similarly, if $k'<k$, we can move the city $c$
with $y(c)=k$ that appears first in the optimal tour $\omega\in
\varphi(y)$. In the first case, we can set $k < y'(c)\le k'$, while in the second
case, we can choose $k' \le y'(c) < k'$. By construction
$\omega\in\varphi(y')$, and therefore $\tilde F(y')\le \tilde F(y)$. It is
also clear from the construction that the step from $y$ to $y'$ can always
be chosen so that the number of classes $|y^{-1}([n])|$ remains constant,
increases by one $|y^{-1}([n])|$, or decreases by one -- unless we already
have $|y^{-1}([n])|=n$, in which case only a decrease is possible, or we
have $|y^{-1}([n])|=1$, in which case only an increase is possible. Thus we
can always find a path along which $\tilde F(y')$ does not increase
and along which $|y^{-1}([n])|$ is non-increasing or non-decreasing,
respectively. Note the moves keeping $|y^{-1}([n])|$ constant might be necessary to
 move the values $y(c)$ stepwise around in $[n]$ to have enough ``space''
to break up individual classes of $y^{-1}$, so that its members in the end
have consecutive values of $y$. It is not hard to convince oneself that
this is always possible. As a consequence, we can always connect any $y$ to
a code with a single class (for which $\varphi(y)=X$). For two adjacent
classes, we simply join, one-by-one, the cities of the smaller class to the
larger one.  Furthermore, the single-class code can be broken by pulling a
city at a time so that (R1) also holds. Note that (R3) is not necessarily
satisfied, however.

In contrast to the previous example of the NPP, here the paths are much
more involved and often longer. We therefore conjecture that the
prepartition encoding is less efficient for the TSP than for the NPP.

\subsubsection{Spanning Forest Encoding for the NPP} 
\label{sect:NPP-SF}

A very different encoding for the NPP can be constructed as follows. Denote
by $Y$ the set of all spanning forests of the complete graph $K_n$. For a
detailed discussion of the combinatorics of spanning forests we refer to
\cite{Teranishi:05}. For each forest $y\in Y$ denote by $y_a$ one of
  its connected components. Since $y_a$ is a tree and thus bipartite, there
  is a uniquely defined bipartition $(V_{y_a}^+,V_{y_a}^-)$ of its vertex
set. We assign $q_i=+1$ for $i\in V_{y_a}^+$ and $q_i=-1$ for
$i\in V_{y_a}^-$ to the other.
\begin{equation} 
  \varphi(y) = \{ x| x_i = p_aq_i,\, i\in V_{y_a},\, p_a=\pm 1 \}
\end{equation}
Suppose the spanning forest $y$ has $k$ components. Then the sign pattern
on each component $y_a$ is uniquely defined by fixing independently the
sign of the lexicographically smallest $i\in V_{y_a}$. Thus $\varphi(y)$
consists of exactly $2^k$ distinct configurations. It follows that
$\varphi(y)=X$ if $y$ contains no edges. Denoting the complement of
  $x$ by $\bar x$, we have $\varphi(y)=\{x,\bar x\}$ whenever $y$ is a
spanning tree.  Since $x$ and $\bar x$ represent the same solution of the
number partitioning problem, $\varphi$ satisfies (Y2) and (Y3).

(R3) holds since removing an edge from the spanning forest $y$ yields
another spanning forest $y'$ that imposes fewer restrictions and thus
corresponds to a larger subset of $X$. In general, write $y'\prec y$ if
$y'$ is a subforest of $y$. Then $\varphi(y)\subset\varphi(y')$. The
unconstrained search space corresponds to the spanning forest $y_0$
without edges. Conversely, every spanning tree $\hat t$ that defines the
bipartition of the globally minimal solution of the original NPP encodes
exactly this solution. Every sequence $\hat t = y_{n-1} \succ y_{n-2} \succ
\dots \succ y_1 \succ y_0$ of spanning forests obtained by successive edge
deletions from $\hat t$ connects $y_0$ and $\hat t$ and each $\varphi(y_i)$
also contains the global minimum encoded by $\hat t$. Thus (R1) holds.

\subsubsection{Subdivision Encoding for the TSP} 

An alternative encoding for the TSP uses a permutation $\psi:[n]\to C$ of
the set of $C$ cities and subdivision $\Pi$ of $[n]$ into consecutive
intervals. We specify $\Pi$ by the upper bound of the interval, i.e.,
$I_u:=\{k| i_{u-1}<k\le i_u\}$. Since the tours are circular, we set $i_0=i_m$
and as usual consider the order $<$ circular on $[n]$. Therefore
$I_1:=\{i_{m+1},\dots,i_n,1,\dots i_1\}$. An encoded configuration
$y:=(\psi,\Pi)$ fixes the order $\psi$ of cities $\psi(k)$ within each of
the index intervals $I_u$. The first city in interval $I_u$ is
$\psi(i_{u-1}+1)$, the last city is $\psi(i_{u})$.  Thus $\pi\in \varphi(y)$
if $\pi$ is obtained by permuting the intervals $I_u$ and following the
order given by $\psi$ within each interval, see Fig.~\ref{fig:TSPsubdiv}.

\begin{figure}
  \begin{center}
    \includegraphics[width=0.5\textwidth]{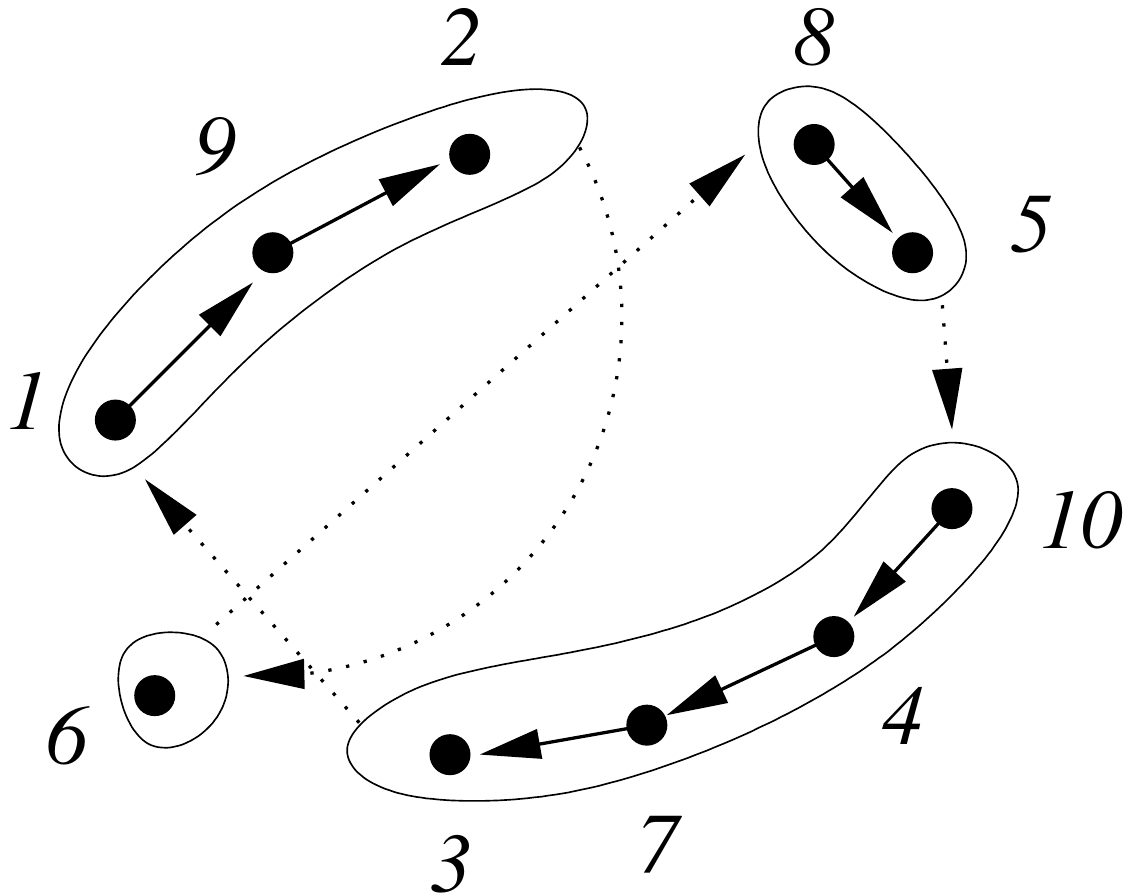}
  \end{center}
  \caption{Example for a subdivision of the TSP. The cities are subdivided
    into classes of a partition within which their order is fixed among all
    restricted tours (full arrows). The order in which the classes are
    traversed remains free (dotted arrows).} 
  \label{fig:TSPsubdiv}
\end{figure}

If $\Pi$ is the discrete partition, then we obviously have $\varphi(y)=X$,
while the indiscrete partition uniquely specifies the tour $\psi$. The
encoding therefore satisfies (F0), (Y0), (Y1), (Y2), and (Y3). Consider any
adjacency relation $\sim$ on $Y$ so that $y\sim y'$ if $\Pi'$ is obtained
by splitting a class (interval) into two or merging two intervals. Then (R3)
is clearly satisfied. 

In order to consider (R1) we specify the adjacency relation $\sim$
  more stringently. If $y\sim y'$ then either (i) $y$ is obtained from $y'$
  by splitting exactly one class of $y'$ into two non-empty parts or
  \emph{vice versa}, or (ii) $y$ and $y'$ exhibit the same partition of the
  cities, i.e., $\Pi=\Pi'$. In case (i), the ordering within each class in
  maintained. For the split interval
  $I_u'=[\psi'(i_{u-1}+1),\dots,\psi'(i_{u})]$, this means that an index
  $j\in[i_{u-1}+1,i_u-1]$ is chosen and the resulting intervals become
  $I_{u_1}=[\psi'(i_{u-1}+1),\dots,\psi'(j-1)]$ and
  $I_{u_2}=[\psi'(j),\dots,\psi'(i_{u})]$. The ordering between intervals
  (classes of $\Pi$) remains fixed. In case (ii), the partition and the
  ordering within the intervals both remain unchanged, but the ordering of
  the intervals (classes of $\Pi$) changes. For our purposes it is not
  important which types of permutations between intervals are allowed, as
  long as they form an ergodic set. Plausible choices are transpositions,
  canonical transpositions, reversals, or even all permutations.
  
Now consider an encoded configuration $\hat y$ with
$\hat x\in\varphi(\hat y)$. The intervals of specified $\hat y$ are partial
tours of the globally optimal solution. Moves on $Y$ can now be performed
so that a new encoding $y'$ is obtained in a stepwise fashion, that uses
the same intervals and brings two partial tours that are consecutive in
$\hat x$ into the desired order. During this stepwise change of $\psi$ the
encoded sets $\varphi(y)$ stay the same, and thus
$\varphi(y')=\varphi(\hat y)$. Now the two appropriate consecutive
intervals can be merged. This reduces $m$ by $1$ and makes $\varphi(y)$
smaller, but the globally optimal solution is still retained, i.e.,
$\hat x\in\varphi(y)$. The procedure can be repeated at most $m-1$ times to
reach the indiscrete partition, which fully specifies the globally optimal
tour. Thus (R1) holds for all choices of neighborhoods that allow
merging/splitting of adjacent intervals and an ergodic permutation of the
intervals.

\subsubsection{Sparse Subgraph Encoding for the Maximum Matching Problem}
\label{sec:sparse_mmx}

For a graph $G=(V,E)$, a matching is a subset $M \subseteq E$ of pairwise
disjoint edges, i.e.\ $(V,M)$ is a graph with a maximum degree  of at most 1.
Denoting by $X$ the set of matchings on $G$, the maximum matching problem
(MMP) $(X,f)$ has the cost function $f$ giving the number of unmatched
nodes
\begin{equation}
f(M) = \big|V \setminus \bigcup_{e \in M} e\big| 
\end{equation}
in a matching $M$. Thus the MMP asks for a subset of edges that cover as
many nodes as possible without having any node contained in more than one
edge \cite{Lovasz:86}.

Now consider an edge subset $S \subseteq E$. In the present context, we
call $S$ sparse if the graph $(V,S)$ has maximum degree 2, so each
connected component of $(V,S)$ is a cycle or path (including isolated nodes
as trivial paths). Denote by $Y$ the set of all sparse subsets of $E$.
Since a matching $M$ is also a sparse subset of $G$, we have
$X \subseteq Y$.

The cover-encoding map $\varphi:Y \rightarrow 2^X$ assigns each $S \in Y$ the
set of maximum matchings of the graph $(V,S)$.  Now with $S$ sparse, the
maximum matching problem on $(V,S)$ is trivially solved separately on each
connected component being a path or cycle. For a path of odd length $k$,
the maximum matching is unique with $(k+1)/2$ edges; a path or cycle of
even length $k$ has exactly two disjoint maximum matchings of cardinality
$k/2$. A cycle of odd length $k$ has exactly $k$ pairwise different maximal
matchings of cardinality $(k-1)/2$.

For each matching $x \in X$, we have $\varphi(x) = \{x\}$ so property (Y2)
holds. Properties (Y0) and (Y1) are fulfilled.  With the choice $\hat y =
\hat x$, (F0) is fulfilled. Property (Y3) holds if and only if $(G,E)$ is
sparse itself.

We consider sparse subsets $D$ and $D^\prime$ as adjacent, $D \sim
D^\prime$, if they differ at exactly one edge, $|(D\cup D^\prime) \setminus
(D\cap D^\prime)|=1$.

In order to demonstrate properties (R1) and (R2), let
$y \in Y \setminus \{ \hat y \}$. We show that there is $y' \sim_Y y$ with
$\tilde F(y') \le \tilde F(y)$ and
$|(y' \cup \hat y) \setminus (y' \cap \hat y)| \le | (y \cup \hat y)
\setminus (y' \cap \hat y)|$. Thus neighbour $y'$ is obtained from $y$
either by adding an edge contained in $\hat y$ or removing an edge not
contained in $\hat y$. If $y \supset \hat x$, find an edge
$e \in y \setminus \hat x$ and set $y'=y \setminus \{e\}$, and we are
done. Otherwise, since $y \neq \hat y$, there is an edge
$\{v,w\}=e \in \hat x \setminus y$. If $y \cup \{e\}=:z$ is sparse, we are
done using $y' = z$. Otherwise at least one of nodes $v$ and $w$ has degree
3 in the graph $(V,z)$; suppose node $v$ has degree 3. Find a maximum
matching $x \in \varphi(y)$. Since $v$ has degree 2 in the graph $(V,y)$,
there is an edge $e'\in y \setminus x$ incident in $v$. Set
$y'=y\setminus \{e'\}$. We easily confirm $\tilde F(y') \le \tilde F(y)$ in
each of the cases above. Sequences for properties (R1) and (R2) are
obtained by induction.

\subsubsection{String Encoding for the Maximum Clique Problem}
\label{sec:seq_clique}

For a graph $G=(V,E)$, a clique is a node subset $C \subseteq V$ 
inducing a fully connected subgraph, i.e.\ $\{v,w\} \in E$ for all
$v,w \in C$ with $v \neq w$. Denoting by $X$ the set of cliques of $G$,
the maximum clique problem (MCP) $(X,f)$ has the cost function $f$
giving the number of nodes
\begin{equation}
f(M) = |V \setminus C|
\end{equation}
outside a clique $M$ \cite{Bomze:99}.

For arbitrary $l \in \mathbb{N}$ and any string of not necessarily
distinct nodes
$(v_1,v_2,\dots,v_l) \in V^l$, we define the greedy clique
$\gamma_G(v_1,v_2,\dots,v_l) \subseteq V$ recursively by 
\begin{equation}
\gamma_G(v_1,v_2,\dots,v_l) = \left\{
\begin{array}{ll}
\gamma_G(v_1,v_2,\dots,v_{l-1}) \cup \{v_l\} & 
\textrm{if } \{v_i,v_l\} \in E \textrm{ for all } i \in [l-1]\\
\gamma_G(v_1,v_2,\dots,v_{l-1}) & \textrm{otherwise}
\end{array} \right.
\end{equation}
and $\gamma_G(\varnothing)=\emptyset$ for the empty string $\varnothing$.

We construct a cover-encoding map $\varphi$ based on strings of length
$|V|=:n$, so $Y=V^n$. For a string $y \in Y$, we denote the substring 
(suffix) from index $k$ to the end (index $n$) by $(y)_{k\dashv}$.
Now $\varphi$ maps a string $ y\in Y$ to maximal greedy
cliques over suffices of $y$,
\begin{equation}
\varphi(y)= \{ \gamma_G((y)_{k\dashv}) : k \in [n] \textrm{ and } 
\forall i \in [n] :
\gamma_G((y)_{k\dashv}) \not\subset \gamma_G((y)_{i\dashv})  \}~.
\end{equation}
So a clique $C$ is contained in $\varphi(y)$ if and only if $C$ is a greedy
clique from a suffix of $y$ and none of the other greedy cliques from $y$
properly contains $C$. This ensures that $\varphi$ produces all the
singletons, thus fulfilling property (Y2). We call $y$ pure if
$|\varphi(y)|=1$. A string $y \in Y$ is pure if and only if $\{ y_i : i \in
[n] \}$ is a clique of $G$. We define strings $y, y' \in Y$ to be adjacent,
in symbols $y \sim_Y y'$, if and only if there is a unique index $i \in
[n]$ with $y_i \neq y_i'$ (Hamming distance 1).

In order to prove properties (R1) and (R2), we first observe that there is
a non-increasing sequence of strings from any $y \in Y$ to a pure
$y^{(\mathrm p)} \in Y$ with $\varphi(y^{(\mathrm p)}) \subseteq
\varphi(y)$ and $\tilde F(y^{(\mathrm p)}) = \tilde F(y)$. The sequence is
obtained by finding a maximal $C \in \varphi(y)$. If $y$ is not pure, there
is $i \in [n]$ with $y_i \notin C$. The next string in the sequence can be
obtained by replacing the entry $y_i$ with an arbitrary element from $C$.

If $y, z \in Y$ are pure with $\varphi(y)=\varphi(z)=\{C\}$ and $|C|<n$,
there is a non-increasing sequence from $y$ to $z$. It may be constructed
by stepwise swapping operations. Since $|C|<n$, there is at least one
element in $C$ found at two distinct positions in $y$ so one of these can
be used as a temporary variable in the swap.

Now let $y, y' \in Y$ with $\tilde F(y') \le \tilde F(y)$. Find a maximal
clique $C \in \varphi(y)$ and a maximal clique $C' \in \varphi(y')$. We
construct a non-increasing sequence from $y$ to $y'$ by concatenating the
following sequences. First, a non-increasing sequence from $y$ to a pure
$y^{(\mathrm p)} \in Y$ with $\tilde F(y^{(\mathrm p)}) = \tilde
F(y)$. Second, a non-increasing sequence from $y^{(\mathrm p)}$ to a pure
$z \in Y$ with $\{ z_1,z_2, \dots, z_{|C|}\}=C$ and $\{ z_1,z_2, \dots,
z_{|C \setminus C'|}\}=C \setminus C'$, and arbitrary $z_{|C|+1},z_{|C|+2},
\dots, z_n \in C$. Third, a sequence from $z$ to a string $z'$ is obtained
by assigning, step by step, nodes in $C'\setminus C$ to entries from
$z_{|C|+1}$ to $z_n$. The sequence is non-increasing because each of its
strings generates $C$ under $\varphi$. On the other hand,
$\gamma_G((z')_{(|C \setminus C'|+1)\dashv}) = C'$ so $\tilde F (z') =
\tilde F(y')$. Now again by swap steps, we transform $z'$ into $y'$.

\section{Coarse-Graining} 
\label{sect:coarse}

Some of the restricted search spaces $\varphi(y)$ introduced above can also
be thought of as coarse-grainings of the original problem. In the following
subsections, we show this for the prepartition and spanning forest
encodings of the NPP, as well as for the TSP.

\subsection{Prepartition Encoding of the NPP} 

Consider the NPP instance with numbers $\{a_1,a_2,\dots,n\}$ and let
$\Pi=\{Q_1,\dots,Q_m\}$ be an arbitrary partition of $[n]$ with classes
(subsets) $Q_j$ so that $m\le n$. Of course, we can think of $\Pi$ as the
classes defined by the prepartition encoding, i.e.\
$\Pi = \{y^{-1}(k)| k\in[n]\}$.  Set $b_j=\sum_{i\in Q_j} a_i$. Then the
set of numbers $\{b_1,\dots,b_m\}$ defines an NPP on $m$ numbers. In terms
of a prepartition $y$ this amounts to $b_k = \sum_{i \in y^{-1}(k)} a_i$.
Note that if $m=n$, then $\Pi$ is the discrete partition in which every
class $Q_j$ contains only a single element, and hence
$\{a_1,\dots a_n\}=\{b_1,\dots,b_m\}$. In the general case the solutions of
the two NPPs are related to each other in the following way.  Denote the
variables for the smaller NPP by $x'_j\in\{+1,-1\}$ and write $f_a$ and
$f_b$ for the cost functions. Then, obviously
\begin{equation}
  f_a(x) = f_b(x') \textrm{ whenever } x_i=x'_j \textrm{ for all } i\in Q_j
\end{equation}
An optimal solution $\hat x$ of the larger problem $(X,f_a)$ corresponds to
a partition $\hat\Omega$ of $[n]$ into exactly two classes $Q_+$ and $Q_-$
so that $x_i=+1$ for $i\in Q_+$ and $x_i=-1$ for $i\in Q_-$. The
coarse-grained NPP $(X',f_b)$ has an optimal solution with the same cost if
(and in the generic case also only if) $Q_j\subseteq Q_+$ or
$Q_j\subseteq Q_-$ holds for all $j\in[m]$, i.e., if (and generically only
if) the coarse-graining partition $\Pi$ is a refinement of the partition
$\hat\Omega$ that encodes the globally optimal solution of the original
problem.

\subsection{Travelling Salesman Problems} 

Recall the subdivision encoding for the TSP and fix an encoding
$y=(\psi,\Pi)$.  The length of the partial tour inside the interval $I_u$
is
\begin{equation}
  \ell_u = \sum_{k=i_{u-1}+2}^{i_u} d_{\psi(k-1)\psi(k)}
\end{equation}
Furthermore, the road from interval $I_p$ to interval $I_q$ is the road 
from $\psi(i_p)$ to $\psi(i_{q-1}+1)$, i.e., 
\begin{equation} 
  \tilde d_{pq} = d_{\psi(i_p),\psi(i_{q-1}+1)}
\end{equation} 
Since a tour $\pi\in\varphi(y)$ is uniquely defined by a permutation 
$\xi:[m]\to[m]$ of the intervals, we have 
\begin{equation} 
  \ell(\pi) = \tilde\ell(\xi) + \sum_{u=1}^m \ell_u
\end{equation}
where $\tilde\ell(\xi)=\sum_{i} \tilde d_{\xi(i),\xi(i+1)}$ is the 
tour length of the TSP restricted to the connections between the fixed 
intervals. With a slight change one can also produce a TSP that retains the 
original values of the cost function. To this end we set 
\begin{equation} 
   d_{pq}' = d_{\psi(i_p),\psi(i_{q-1}+1)} + (\ell_p+\ell_q)/2
\end{equation} 
and $\ell'(\xi):= \sum_{i} \tilde d'_{\xi(i),\xi(i+1)}$. A short
computation verifies $\ell(\pi)=\ell'(\xi)$. 

Note that we naturally obtain an asymmetric TSP even if the original
problem was symmetric since now $d'_{pq}\ne d'_{qp}$ because in general we
will have $d_{\pi(i_p)\pi(i_{q-1}+1)}\ne d_{\pi(i_q)\pi(i_{p-1}+1)}$. 

\subsection{Spanning Forest Representation of the NPP} 

Let us now return to the NPP. Let $y$ be a spanning forest of $K_n$. 
For each connected component (tree) $t\dot\subseteq y$ let $V^+_t$ and 
$V^-_t$ be the corresponding bipartition of the vertex set of $t$. Define 
\begin{equation} 
  b_t  = \left|  \sum_{i\in V^+_t} a_i -  \sum_{i\in V^-_t} a_i \right| 
\end{equation} 
This defines an instance of the NPP with as many numbers $b_t$ as connected
components in $y$. A choice of sign $z_t\in\{+1,-1\}$ for $t$ implies a
particular choice of sign for each $a_i$, i.e., each configuration $z$ for
the NPP with numbers $\{b\}$ corresponds to a configuration $x$ of the
original problem with numbers $\{a\}$. Clearly, these coincide with the
configurations $\varphi(y)$ described in Sect.\ \ref{sect:NPP-SF}.

\subsection{Some remarks on coarse-grainings: analogies with the renormalization group?} 

It is tempting to speculate that the coarse-grainings we have observed in
the above are analogous to those observed in renormalization group theory,
well known for its use in analyzing spin glasses and related disordered
systems \cite{Rosten:12}. In our context it can be described as
follows. For a given type of problem, such as the NPP or the TSP, consider
the space $\mathfrak{X}$ of all possible instances of all sizes. A
particular instance (e.g.\ the NPP with $n$ numbers
$a=\{a_1,a_2,\dots,a_n\}$) is a point $\mathbf{x}\in\mathfrak{X}$. Now we
define a set $\mathcal{R}$ of maps $r:\mathfrak{X}\to \mathfrak{X}$ that
map larger instances to strictly smaller ones. Of interest in this context
are in particular those maps $r$ that (approximately) preserve salient
properties. Since $r(\mathbf{x})$ is a smaller instance than $\mathbf{x}$,
the map $r$ is not invertible. The maps in $\mathcal{R}$ can of course be
composed, and thus form a semi-group which is known as the
\emph{renormalization group} \cite{Wilson:74,Wilson:71}.  Of course, while
renormalization groups in statistical physics are used to analyse the
typical behavior of large systems near criticality, our focus in the
present optimization context is on particular instances of systems that are
typically large. This does not yet rule out an analogy, assuming that
something like an ergodic hypothesis applies, where the behaviour of
typical instances is indeed that of the average. Thus, starting from
$\mathbf{x}=(X,f)$, or more precisely, an encoding $y$ so that
$\varphi(y)=\mathbf{x}$, we can think of adjacent encodings $y'\sim y$ with
$|\varphi(y')|<|\varphi(y)|$ as ``renormalized'' versions of
$\varphi(y)$. A path in $(Y,\sim)$ leading from $\mathbf{x}$ to the trivial
instance thus can be seen as the iteration of progressively renormalized
samples.

A positive example of this analogy could be that of the spanning forest
encoding of the NPP with real-space renormalization schemes for Ising
spins: an example of an $\mathcal{R}$ could be a so-called block spin
transformation \cite{Kadanoff:66}, where suitable averages are taken over
small local subsets of spins, which are then progressively scaled up to
larger system sizes to explore their critical behaviour.  Only certain
block variables will work for such schemes, depending on the underlying
symmetries of the problem, just as, in the earlier subsection, only the
sums of numbers $a_i$ preserve the optimal solutions.  Such simple
real-space scalings, do not, however, always exist for our optimization
schemes: the prepartition encoding of the TSP, for example, cannot be
rephrased as a coarse-grained (i.e., reduced-size) TSP. To see this, simply
observe that the evaluation of a tour in the restricted model still
requires an optimization over multiple incoming and outgoing connections
(roads) for every city, i.e., the information of inter-city distances
cannot by collapsed in any way upon the transition from a larger (less
restricted) to a smaller (more restricted) problem. This does not, however
rule out the possibility of, say, a renormalisation-type scaling in some
sort of generalised Fourier space. In the case of landscapes on
  permutation spaces, the characters of the symmetric group provide a
  suitable Fourier-like basis \cite{Rockmore:02a}, which seem to be
  applicable to TSP and certain assignment problems. These and other
possibilities are currently being explored, since it seems that deep
similarities may underlie relatively superficial differences in the nature
of the transformations involved in renormalization groups and the
optimization-facilitating encodings that are the subject of this paper.

\section{Heuristic Optimization over $Y$}
\label{sect:num}

\subsection{General Considerations} 

So far, we were only concerned with the abstract structure of
cover-encoding maps $\varphi:Y\to 2^X$ and the adjacencies $\sim$ in their
encodings $Y$. On this theoretical basis, we can now construct a
search-based \emph{optimization heuristic} that generalizes the approaches
in \cite{Ruml:1996} and our earlier work \cite{Klemm:12b}. The idea is very
simple: If we have an accurate and efficiently computable heuristic, we can
quickly obtain good upper bounds $\alpha_f(y)\ge \tilde F(y)$ for each of
the restricted problems $(\varphi(y),f)$. The properties (R1) and (R2)
guarantee the existence of non-increasing paths from an arbitrary initial
encoding $y_0$ down to a final encoding $\hat y$. Steps to adjacent
encodings that decrease $\alpha_f$ therefore will have a bias toward the
optimal solution of the original problem.

The fact that we have to rely on the quality of the estimate
$\alpha_f(y)\approx \tilde F(y)$ also suggests that it should be more
efficient to restart the search often rather than try to overcome barriers
of local minima in the landscape $(Y,\alpha_f)$. In the examples above
local minima in $(Y,\alpha_f)$ can, as we have proved, appear only due to
insufficient accuracy of the heuristic solutions $\alpha_f(y)$ for some
encodings. 

The discussion above also implies guidelines for the construction of
encodings:
\begin{enumerate} 
  \item The cover-encoding map $\varphi:Y\to X$ should be of a form that
    guarantees that $(Y,\sim,\tilde F)$ has no local optima, i.e., the 
    properties (R1), (R2), (Y1), and (Y2) should hold.
  \item The paths in $(Y,\sim)$ connecting large sets $\varphi(y)$ to smaller
    ones should not contain many steps along which the sets do not shrink.
    For instance, while the prepartition encoding for the NPP always has a
    strictly coarse-grained neighbor, this is not the case for the
    prepartition encoding for the TSP. We therefore suspect that other
    encodings for the TSP will work better in general. 
  \item The heuristic producing $\alpha_f(y)$ needs to be efficient,
    ideally not much slower than the function evaluations for the initial 
    cost function $f$. 
\end{enumerate} 

In order to demonstrate that the theory developed above may also have
practical implications we probe instances of encoded landscapes by adaptive
walks. To simulate a realization of an adaptive walk, we first generate an
initial state $y(0)$ by a procedure specific for the given landscape. At
each time step $t$, we uniformly draw a neighbor z of state $y(t)$ and set
$y(t+1)=z$ if $\tilde F(z) \le \tilde F(y(t+1))$, $y(t+1)=y(t)$ otherwise.

We select the MMP and the MCP as examples because (1) oracle functions and
encodings can constructed that guarantee the absence of strict local
minima; and (2) there is a simple and efficient algorithm for exact
computation of $\tilde F(y)$ for each $y \in Y$. So we do not require
heuristics. We leave the combination of cover-encoding maps with
non-trivial heuristics for a future manuscript.

\subsection{Maximum Matching Problems}

\begin{figure}[t]
\begin{center}
  \includegraphics[width=0.75\textwidth]{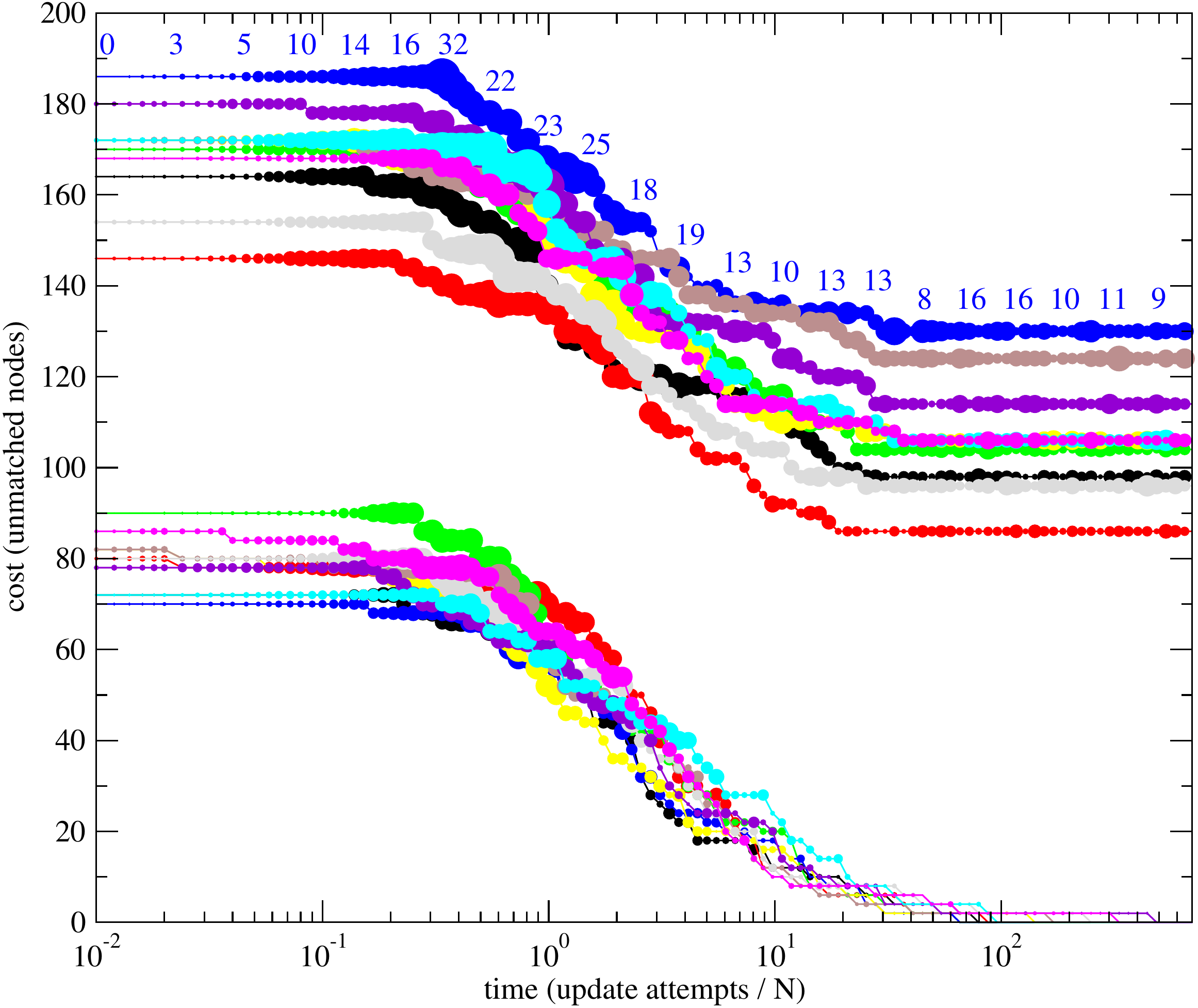}
\end{center}
\caption{Time evolution of cost in adaptive walks on the landscape of
  matchings encoded by sparse subgraphs. Radius of symbols is proportional
  to the number of degrees of freedom (paths of even length $\neq 0$ and
  cycles of odd length) in the encoded state.  Upper set of curves: 10
  realizations, each on an independently generated ER random graph on 500
  nodes with edge probability $p=2/(N-1)$, i.e.\ average degree 2.  Lower
  set of curves: 10 realizations on graphs (500 nodes) with perfect
  matching planted first, then adding each of the remaining possible edges
  with $p=1/(N-2)$, resulting in average degree 2. Each adaptive walk is
  initialized by a random maximal matching $L(0)$. Departing from the empty
  set, $L(0)$ is generated by considering the edges of the graph $G$ in the
  order of a random uniform permutation and adding an edge to $L(0)$ if the
  result remains a matching.  }
  \label{fig:matchingrg_0}
\end{figure}

Figure~\ref{fig:matchingrg_0} shows the time evolution of cost in adaptive
walks on the encoded landscapes of matchings encoded by sparse graphs,
where the figure caption contains details on the instances and the
definitions are to be found in section~\ref{sec:sparse_mmx}. Note the
logarithmic time axis in the plot.

Both on purely random graphs and on those with a planted perfect matching,
a solution of globally minimal cost is found. In addition to reaching a
minimum cost solution, we observe another interesting feature of the
dynamics. The sizes of symbols (and annotated values in the uppermost
curve) indicate the number of degrees of freedom $\delta = \log_2
|\varphi(y(t))|$ of the solution $y(t)$ at time $t$. This is the number of the
connected components in the sparse graph,  with two distinct maximum
matchings. Departing from a singleton state ($\delta=0$), the number of
degrees of freedom first increases and then decreases during the descent of
cost. So the optimization happens as a walk through states $y \in Y$ with
large cardinality $|\varphi(y)|$ of the encoded set. Furthermore as a
particular feature of this encoded landscape, the optimization dynamics
eventually returns to low $\delta$, having $|\varphi(y(t))|=1$ with a single
optimal solution selected at large time $t$.

\begin{figure}[t]
\begin{center}
  \includegraphics[width=0.75\textwidth]{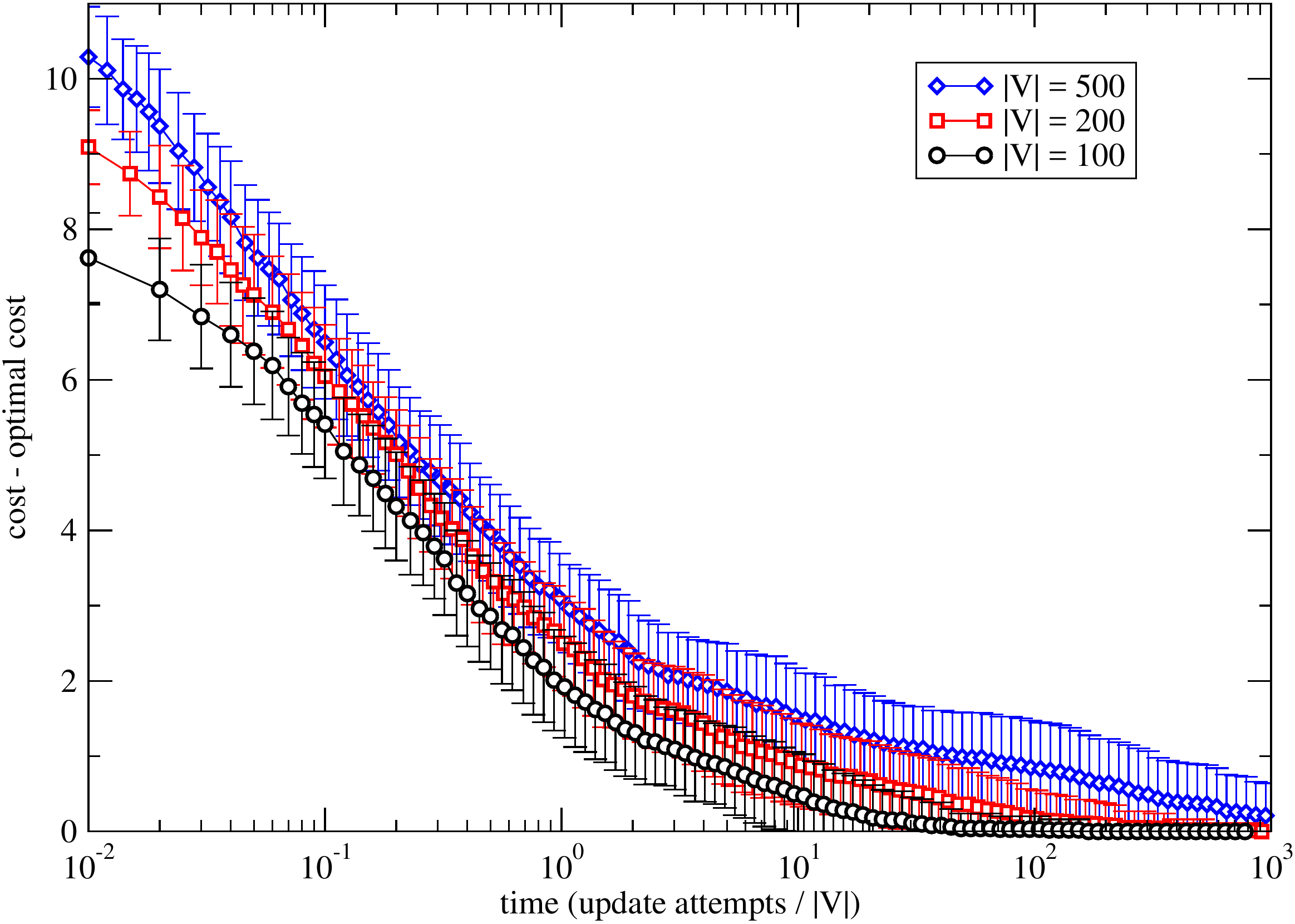}
\end{center}
\caption{Time evolution of cost in adaptive walks
  on the landscape of cliques encoded by node sequences. For each graph
  size $|V|$, 100 random graph instances with parameter $p=1/2$ are
  generated independently. For each instance, an adaptive walk on the
  encoded landscape is performed with starting state $(1,1,\dots,1)$.
  Plotted values are differences between $\tilde F(y(t))$ of the state
  $y(t)$ held by the adaptive walk at time $t$ and the optimal cost $F(\hat
  x)$, averaged over the 100 instances. Length of error bars is the
  standard deviation over these instances.  The exact $F(\hat x)$ is
  computed with a branch-and-bound algorithm \cite{Ostergard:02a}.  }
\label{fig:cliquergc_0}
\end{figure}

\subsection{Maximum Clique Problems}

Figure~\ref{fig:cliquergc_0} shows the time evolution of the cost of
adaptive walks on the encoded landscapes of graph cliques encoded by node
sequences. The figure caption contains details on the instances and
relevant definitions can be found in section~\ref{sec:seq_clique}. We plot
the difference with the minimum cost $\tilde F(y)$, so that a plotted value
of $0$ means the global optimum has been found.

Our tentative conclusions are that the time to reach the optimal solution
scales moderately with problem size. The standard deviation over
realizations (error bars in the plot) also indicates a moderate variation
of optimization time across these randomly generated instances.

\section{Discussion and Conclusions}

In this contribution we have shown that, in principle, it is possible to
construct a genotypic encoding for any given phenotypically encoded
combinatorial optimization problem with the property that the encoded
landscape has no strict local minima. The construction hinges on three
ingredients: a cover-encoding map $\varphi:Y\to 2^X$ that satisfies a few
additional conditions, a suitable adjacency relation on $Y$, and an oracle
function that (miraculously) returns the optimal cost value on the
restrictions of the original problem to the covering sets $\varphi(y)$. Of
course, if we had such an oracle function in practice, we would not need a
search heuristic in the first place.

Nevertheless, the concepts of oracle functions and cover-encoding maps
  are not just an empty excercise. We have seen that cover-encoding maps
  $\varphi$ give rise to practically useful encodings \emph{provided} there
  is a good deterministic heuristic for the restriction of the optimization
  problem to $\varphi(y)$. For the NPP, it turns out that the
  Karmarkar-Karp differencing algorithm \cite{Karmarkar:82,Boettcher:08}
  provides a very good approximation to the oracle function. The
  prepartition encoding proposed by \citet{Ruml:1996}, on the other hand,
  ensures that the landscape of the oracle function is of the desirable
  type that has no local minima. Together these two facts make the work of
  \citet{Ruml:1996} a showcase application of the theory developed here.

The numerical simulations of Section~\ref{sect:num} strongly suggest that
encodings with local-minima-free landscapes indeed admit efficient
optimization by local search based methods also for other optimization
  problems. Hence the theoretical results obtained here are of practical
relevance provided a sufficiently accurate approximation to the oracle
function can be computed. The precise meaning of the phrase 'sufficiently
accurate approximation' remains an open question for future research. We
suspect, however, that the main problem arises when the approximation
claims $\alpha_f(y')<\alpha(y)$, suggesting that a step from $y$ to $y'$ be
accepted, while $\tilde F(y')>\tilde F(y)$ holds, suggesting the step to
$y'$ should not be taken.

The construction of encodings for several well-known optimization problems
also highlights the connections between encodings and a natural notion of
coarse-graining for optimization problems. This also suggests a link to
renormalization group methods commonly used in statistical physics. While
it is clear that there is not a trivial correspondence, and that real-space
coarse-grainings are just a particular subclass of encodings, this
connection certainly deserves further study. The formalism laid out here at
least provides a promising starting point.

An important issue in biology is that fact that encodings as symbolised by
the genotype-phenotype map, are themselves subject to evolutionary changes
because the mechanisms of development evolve. It is well known that
features of the genotype-phenotype, such as robustness \cite{Wagner:05} and
accessibility \cite{Fontana:98,Ndifon:09} have a key influence on evolution
in the long term. Mathematical approaches that focus on the properties of
encodings thus may become a very useful component in formal theories of
evolvability and developmental evolution.

\begin{acknowledgements}
  KK acknowledges funding from MINECO through the Ram{\'o}n y Cajal program
  and through project SPASIMM, FIS2016-80067-P (AEI/FEDER, EU).  This
  project has received funding from the European Research Council (ERC)
  under the European Union's Horizon 2020 research and innovation programme
  (grant agreement N.\ 694925).
\end{acknowledgements}

\bibliographystyle{spbasic} 
\bibliography{enc2}

\end{document}